\declaretheorem[within=section]{theorem}
\declaretheorem[sibling=theorem]{lemma}
\declaretheorem[sibling=theorem]{proposition}
\newcounter{termcounter}
\renewcommand{\thetermcounter}{\Alph{termcounter}}
\crefname{term}{term}{terms}
\def\term{\@ifnextchar[\term@optarg\term@noarg}
\def\term@optarg[#1]#2{%
  \textup{(#1)}%
  \def\@currentlabel{#1}%
  \def\cref@currentlabel{[][2147483647][]#1}%
  \cref@label[term]{#2}}
\def\term@noarg#1{%
  \refstepcounter{termcounter}%
  \textup{(\thetermcounter)}%
  \cref@label[term]{#1}}
\newcommand{\nfrac}{\nicefrac}
\newcommand{\half}{\nfrac12}
\newcommand{\ignore}[1]{}
\newcommand{\argmin}{\mathop{\mathrm{argmin}}}
\newcommand{\poly}{\mathrm{poly}}
\newcommand{\set}[1]{\{#1\}}
\renewcommand{\vec}[1]{{\mathbf{#1}}}
\definecolor{DSred}{rgb}{1,0,0}
\renewcommand{\leq}{\leqslant}
\renewcommand{\geq}{\geqslant}
\renewcommand{\ge}{\geqslant}
\renewcommand{\le}{\leqslant}
\renewcommand{\epsilon}{\varepsilon}
\newcommand{\R}{\mathbb{R}}
\newcommand{\Z}{\mathbb{Z}}
\begin{document}
\title{On the Convergence of the Hegselmann-Krause System}
\author{Arnab Bhattacharyya\thanks{DIMACS, Rutgers
    University. Research done while at Princeton University and Center
    for Computational Intractability, supported by NSF Grants CCF-0832797, 0830673, and 0528414. \texttt{arnabb@dimacs.rutgers.edu}.}
\and
Mark Braverman\thanks{Department of Computer Science, Princeton University, \texttt{mbraverm@cs.princeton.edu}. Research supported in part by an Alfred P. Sloan Fellowship, an NSF CAREER award, and a Turing Centenary Fellowship.}
\and
Bernard Chazelle\thanks{Princeton University and Coll\`ege de France, \texttt{chazelle@cs.princeton.edu}. This work was supported in part by NSF grants CCF-0832797, CCF-0963825, and CCF-1016250.}
\and
Huy L. Nguy$\tilde{\hat{\mbox{e}}}$n\thanks{Department of Computer Science, Princeton University, \texttt{hlnguyen@princeton.edu}. Supported in part by NSF CCF-0832797 and a Gordon Wu fellowship.}}
\maketitle

\begin{abstract}
We study convergence of the following discrete-time non-linear dynamical system:
$n$ agents are located in $\R^d$ and at every time step,
each moves synchronously to the average location of all agents within
a unit distance of it. This popularly studied system was introduced by
Krause to model the dynamics of opinion formation and is often
referred to as the {\em Hegselmann-Krause model}. We prove the first
polynomial time bound for the convergence of this system in arbitrary
dimensions. This improves on the bound of $n^{O(n)}$ resulting from 
a more general theorem of Chazelle \cite{Chazelle11}. Also, we show a
quadratic lower bound and improve the upper bound for one-dimensional
systems to $O(n^3)$. 
\end{abstract}

\section{Introduction}
Lately, there has been a surge of attention given to network-based dynamical systems,
in which agents interact according to local rules via a dynamic communication
graph~\cite{EK10}. Much of the previous work has focused on the
exogenous case, where the communication topology is decoupled from
the evolution of the system.  We refer interested readers
to~\cite{BHT09} for a good overview of research in this area.
In the more common, endogenous version, the communication graph changes over time according to the
current states. The feedback loop between dynamics and topology
creates considerable difficulties, and efforts have been underway 
to build an algorithmic calculus within the broad  
framework of {\em influence systems}~\cite{chazelle-infl}.
This work investigates the complexity
of {\em Hegselmann-Krause systems} (abbreviated as HK system from now on), 
a popular model of opinion dynamics that has proven highly 
influential over the years~\cite{Axelrod97, Fortunato05, HK02, Krause00, Lorenz07}
and stands as the archetype of a diffusive influence system~\cite{Ccacm12}.
In the $d$-dimensional version of the model, each agent has an opinion
represented as a point in $\R^d$. Two agents are neighbors if they
are within unit distance from each other. At every time step, each
agent moves synchronously to the mass center of its neighbors.

HK systems are known to converge~\cite{HendrickxB06, Lorenz05, Moreau05},
meaning that they eventually come to a full stop.
The convergence time has been bounded by $n^{O(n)}$ time 
and conjectured to be polynomial~\cite{Chazelle11}.
It was shown to be $O(n^5)$ in the case $d=1$~\cite{MBCF07}.
The contribution of this work is threefold: first, we
prove that the convergence time is indeed polynomial in $n$,
regardless of the dimension; second, we lower the one-dimensional bound
from $O(n^5)$ to $O(n^3)$; third, we establish a quadratic lower bound,
which improves the known bound of $\Omega(n)$~\cite{MBCF07}.
We also consider noisy variants of the model.

The bidirectionality of the system plays a crucial role in the proof,
as it should. Indeed, it is known from~\cite{chazelle-infl}
that allowing different radii and averaging weights for the agents
can prevent the convergence of the communication graph.
Our proofs are an elementary mix of geometric and algebraic techniques.
Much of the current technology for HK systems centers around
products of stochastic matrices. This work injects a geometric
perspective that, we believe, will be necessary for further progress
on the more difficult directional case.

\ignore{
There has been a lot of work to understand influence systems where
agents interact according to some simple rules via a communication
graph. As noted by~\cite{BHT09}, most previous work focused on the
setting where the communication graph is determined {\em exogenously}
(independent of the current states). We refer interested readers
to~\cite{BHT09} for a good overview of previous work in this area. The
case where the communication graph changes over time according to the
current states remains particularly challenging. Here, we consider the
Hegselmann-Krause system (abbreviated as HK system from now on), a
popular model of opinion dynamics~\cite{Krause00, HK02}. In the
$d$-dimensional version of this model, each agent has an opinion
represented as a vector in $\R^d$. Two agents are neighbors if they
are within a distance $1$ from each other. In every time step, each
agent moves synchronously to the center of mass of its
neighbors. This model is a prototypical example of an {\em
  endogenously} changing communication graph.  

In a recent work by Chazelle~\cite{Chazelle11}, it is shown that the
HK system converges in $n^{O(n)}$ time in any dimension. Before that,
Mart\'{i}nez et al.~\cite{MBCF07} showed the system converges in
$O(n^5)$ time for $d=1$.  However, it is not clear that their
technique would generalize to higher dimensions. \cite{Chazelle11}
conjectures the convergence time is polynomial for any constant $d$
and ``leave[s] this as an interesting open problem''. In this paper, we
show that the convergence time is actually polynomial for any $d$,
answering the question of~\cite{Chazelle11}. Note that for $n$ agents,
we never need to consider higher than $n$ dimensions since the
locations of the agents are always inside the $n$ dimensional subspace
spanned by their initial locations. We also improve the bound for
convergence time for $d=1$ to $O(n^3)$. Additionally, we give an
instance for $d=2$ whose convergence time is at least $\Omega(n^2)$,
improving the previous best lower bound of $\Omega(n)$ by
\cite{MBCF07}. 
}
\section{Preliminaries}

We formally define the discrete-time {\em HK system in dimension $d$}  as
follows. There are $n$ agents. For every $t\in \Z^{\geq
  0}$ and for every $i \in [n]$, the {\em position} of agent $i$ at
time $t$ is $\vec{x}_i(t) \in \R^d$. The positions at $t=0$ are given
and, thereafter, are updated synchronously according to the following rule:
\begin{equation}
\vec{x}_i(t+1) = \frac{\sum\limits_{j: \|\vec{x}_i(t) - \vec{x}_j(t)\| \leq
    1} \vec{x}_j(t)} {\sum\limits_{j: \|\vec{x}_i(t) - \vec{x}_j(t)\| \leq 1} 1}
\end{equation}
Here, $\|\cdot\|$ denotes the Euclidean norm.
We say that agents $i$ and $j$ are {\em neighbors} at time $t$ if
$\|\vec{x}_i(t) - \vec{x}_j(t)\| \leq 1$, and we denote the {\em
  neighborhood} at time $t$ of agent $i$ by $\mathcal{N}_i(t) = \{j :
i\text{ and }j\text{ neighbors at time }t\}$. Note that if $i \in
\mathcal{N}_j(t)$, then $j \in \mathcal{N}_i(t)$. Also for a given
$\vec{x} \in \R^d$, we define the {\em weight of $\vec{x}$ at time
$t$} to be $w_t(\vec{x}) = |\set{i: \vec{x}_i(t) = \vec{x}}|$. Note
that at any given $t$, $0\leq w_t(\vec{x})\leq n$ for all $\vec{x}
\in \R^d$ and that the sum of all weights equals $n$. Also, let the
{\em weight of agent $i$ at time $t$} denote $w_t(\vec{x}_i(t))$. 
The weight of an agent is monotonically non-decreasing with time.
The system is said to have {\em converged} at time $t$ if $\vec{x}_i(t+1) =
\vec{x}_i(t)$ for all $i \in [n]$.

In the case of one dimension, one can observe that order is preserved:
\begin{proposition}[Lemma 2 in \cite{Krause00}]\label{prop:order}
The $HK$ system in dimension $1$ preserves order of positions. That
is, for any $i, j \in [n]$, if $x_i(0) \leq x_j(0)$, then $x_i(t) \leq x_j(t)$ for all $t > 0$.
\qed
\end{proposition}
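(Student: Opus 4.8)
The plan is to prove the statement by induction on $t$, so that everything reduces to a single \emph{one-step} claim: if $x_i(t) \le x_j(t)$ then $x_i(t+1) \le x_j(t+1)$. Since the update of each agent is the average of its neighbors, and in dimension $1$ the neighborhood of an agent at position $x$ is exactly the set of agents lying in the window $[x-1,\,x+1]$, the whole argument becomes a comparison of two averages over two windows of equal length, the window of $j$ being the rightward shift of the window of $i$.

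To carry this out I would fix $t$, assume $x_i(t)\le x_j(t)$, and partition the relevant agents into three multisets according to which windows they fall in: $P = \mathcal N_i(t)\setminus\mathcal N_j(t)$ (neighbors of $i$ only), $Q = \mathcal N_i(t)\cap\mathcal N_j(t)$ (common neighbors), and $R = \mathcal N_j(t)\setminus\mathcal N_i(t)$ (neighbors of $j$ only). Then $x_i(t+1)$ is the average of $P\cup Q$ and $x_j(t+1)$ is the average of $Q\cup R$, and both sets are nonempty since $i\in P\cup Q$ and $j\in Q\cup R$. The geometric heart of the matter is the following elementary separation: every point of $P$ lies strictly to the left of $x_j(t)-1$, every point of $Q$ lies in $[x_j(t)-1,\,x_i(t)+1]$, and every point of $R$ lies strictly to the right of $x_i(t)+1$. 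Consequently, elementwise, $P\le Q\le R$, and moreover $P\le R$ (any $p\in P$ satisfies $p<x_j(t)-1\le r$ for any $r\in R$).

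Given this separation, I would finish with a short averaging lemma. When $Q\neq\emptyset$, the average of $P\cup Q$ is a convex combination of $\mathrm{avg}(P)$ and $\mathrm{avg}(Q)$ with $\mathrm{avg}(P)\le\mathrm{avg}(Q)$, hence at most $\mathrm{avg}(Q)$; symmetrically $\mathrm{avg}(Q\cup R)\ge\mathrm{avg}(Q)$; chaining these gives $x_i(t+1)\le\mathrm{avg}(Q)\le x_j(t+1)$. The only case needing separate care---and the one I expect to be the main pitfall---is $Q=\emptyset$, which occurs precisely when the two windows either barely overlap ($1<x_j(t)-x_i(t)\le 2$) or are disjoint: here one cannot route the comparison through $\mathrm{avg}(Q)$. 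But the elementwise bound $P\le R$ still holds, so $x_i(t+1)=\mathrm{avg}(P)\le\max(P)\le\min(R)\le\mathrm{avg}(R)=x_j(t+1)$, and the claim follows. Induction on $t$ then yields the proposition; the degenerate case $x_i(t)=x_j(t)$ is automatically covered, since $\mathcal N_i(t)=\mathcal N_j(t)$ forces $x_i(t+1)=x_j(t+1)$.
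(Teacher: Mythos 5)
Your argument is correct and complete. Note that the paper itself does not prove this proposition---it is quoted from Krause (Lemma 2 in \cite{Krause00}) with only a citation---so there is no in-paper proof to compare against; your window-shift decomposition into $P$, $Q$, $R$ with the separation $P < x_j(t)-1 \le Q \le x_i(t)+1 < R$ and the convex-combination-of-averages step is essentially the standard argument for this fact, and you correctly isolate and handle the only delicate case, $Q=\emptyset$ (the subcase $P=\emptyset$ in the other branch is trivial, since then $\mathrm{avg}(P\cup Q)=\mathrm{avg}(Q)$).
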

For this reason, in one dimension, we can number the agents from $1$
to $n$ such that $x_1(t) \leq x_2(t) \leq \cdots \leq x_n(t)$ for all $t
\geq 0$. The one-dimensional system also has the following
decomposability property:
\begin{proposition}[Proposition 2 in \cite{BHT09}]\label{prop:dec}
Suppose $d=1$.
If at time $t$, and for some $i\in [n]$, $|x_{i+1}(t) - x_i(t)| > 1$,
then for all $t'>t$ also, $|x_{i+1}(t') - x_i(t')| > 1$. So, the
system can be decomposed into two subsystems, one consisting of agents
$\set{1,\dots,i}$ and the other of agents $\set{i+1,\dots,n}$, each
evolving independently after time $t$.\qed
\end{proposition}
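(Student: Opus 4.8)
The plan is to show that once a gap of width greater than $1$ opens between two consecutive agents, the agents on either side can no longer influence each other, and then to argue that such a gap can only widen. The engine of the argument is \Cref{prop:order}: since the one-dimensional system preserves order, we have $x_j(t') \le x_i(t')$ for every $j \le i$ and $x_k(t') \ge x_{i+1}(t')$ for every $k \ge i+1$ at all times $t'$. Thus, whenever $x_{i+1}(t') - x_i(t') > 1$ holds, it follows that $x_k(t') - x_j(t') \ge x_{i+1}(t') - x_i(t') > 1$ for all such $j,k$, so no agent in $L = \{1,\dots,i\}$ is a neighbor of any agent in $R = \{i+1,\dots,n\}$ at time $t'$; formally $\mathcal{N}_j(t') \subseteq L$ for $j \in L$ and $\mathcal{N}_k(t') \subseteq R$ for $k \in R$.

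First I would exploit this non-interaction to control the two boundary agents at the step immediately following the appearance of the gap. Because $x_i(t+1)$ is the average of the positions $\{x_{j'}(t) : j' \in \mathcal{N}_i(t)\}$ and all these indices lie in $L$, each such position is at most $x_i(t)$, whence $x_i(t+1) \le x_i(t)$. Symmetrically, $x_{i+1}(t+1)$ is an average of positions with indices in $R$, each at least $x_{i+1}(t)$, so $x_{i+1}(t+1) \ge x_{i+1}(t)$. Subtracting gives
$$x_{i+1}(t+1) - x_i(t+1) \ge x_{i+1}(t) - x_i(t) > 1,$$
so the gap persists to time $t+1$ and, in fact, never shrinks. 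A straightforward induction on $t'$ then establishes $x_{i+1}(t') - x_i(t') > 1$ for all $t' > t$, which is the first assertion.

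For the decomposition claim I would note that the persistence just proved means the hypothesis of the non-interaction observation holds at every time $t' \ge t$, so $\mathcal{N}_j(t') \subseteq L$ and $\mathcal{N}_k(t') \subseteq R$ for all such $t'$. Consequently the update rule for any agent in $L$ depends only on the current positions of agents in $L$, and likewise for $R$. Hence after time $t$ the two groups evolve as completely independent HK systems, one on $\{1,\dots,i\}$ and one on $\{i+1,\dots,n\}$.

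There is no serious obstacle here once \Cref{prop:order} is available; the only point requiring care is the monotone movement of the two boundary agents, which rests on the fact that in one dimension each agent's new position is a convex combination of neighbor positions lying weakly on one side of it. I would emphasize that the argument is genuinely one-dimensional: in higher dimensions a gap in a single coordinate need not separate the neighbor graph, and indeed no analogous decomposition is available there.
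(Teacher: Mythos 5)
Your argument is correct. The paper itself states \cref{prop:dec} without proof, citing it from the literature, so there is nothing to compare against; your proof is the natural one, and the two key steps --- that order preservation (\cref{prop:order}) turns the single gap $x_{i+1}(t')-x_i(t')>1$ into a complete separation of the neighbor graph between $\{1,\dots,i\}$ and $\{i+1,\dots,n\}$, and that each boundary agent then averages only over positions weakly on its own side, so the gap cannot shrink --- are both sound and suffice for the induction and the decomposition claim.
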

Therefore, if at time $t$, an agent has no neighbor strictly to its
left and no neighbor strictly to its right, it never moves
subsequently, and we say that the agent has {\em frozen} at time $t$.

This decomposability property is not true in higher dimensions. For instance, consider the
two-dimensional HK system with agents at positions $(0,0)$, $(1,0)$ and
$(\half,1)$; the third agent does not neighbor any other
agent at $t=0$, but this is not so at $t=1$.

\section{Convergence in One Dimension}

\begin{theorem}\label{thm:1dimub}
The HK system in $1$ dimension converges within $O(n^3)$ time steps.
\end{theorem}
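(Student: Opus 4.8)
The plan is to combine the order and decomposition structure of the one-dimensional system with a monotone analysis of the extreme agents, organized into a bounded number of phases.

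First I would set up a phase decomposition. By \Cref{prop:order} the agents stay sorted, $x_1(t)\le\cdots\le x_n(t)$, and by \Cref{prop:dec} any gap $x_{i+1}-x_i$ that once exceeds $1$ stays above $1$ forever. Call a \emph{split} the first time a given gap exceeds $1$, and a \emph{merge} the first time two adjacent agents become coincident; once two agents are equal they share a neighborhood, hence move identically, forever, so a merge is permanent as well. Since each split increases the number of connected components and each merge decreases the number of distinct positions, there are at most $n-1$ of each, so the evolution decomposes into $O(n)$ phases delimited by these structural events. It therefore suffices to prove that whenever the system has not yet converged, a structural event occurs within the next $O(n^2)$ steps; multiplying by the $O(n)$ event budget yields $O(n^3)$.

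Next I would isolate the engine driving a phase: the leftmost agent. Within a connected component consecutive gaps are at most $1$, so the diameter is below $n$ and $x_1$ is nondecreasing with total displacement below $n$. Writing $d_i=|\mathcal{N}_i(t)|$ and using that the leftmost agent sees only agents to its right, so $\mathcal{N}_1(t)\subseteq\mathcal{N}_2(t)$, a direct computation yields the leftmost-gap identity
\begin{equation}
x_2(t+1)-x_1(t+1)=\frac{1}{d_2}\sum_{j\in\mathcal{N}_2(t)\setminus\mathcal{N}_1(t)}\bigl(x_j(t)-x_1(t+1)\bigr).
\end{equation}
In particular the two leftmost agents merge at step $t+1$ exactly when no agent lies in the window $(x_1(t)+1,\,x_2(t)+1]$, and otherwise the gap is governed by the agents just beyond the leftmost agent's reach. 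Since $x_2\in\mathcal{N}_1$, the leftmost agent also moves right by at least $(x_2(t)-x_1(t))/n$ each step, so a persistently large leftmost gap forces definite, budget-consuming progress of $x_1$.

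The hard part will be turning this local picture into an $O(n^2)$ bound per phase, because the identity shows the leftmost gap is coupled to the entire right side of the system, and the update matrix is row-stochastic but not doubly stochastic, so the naive variance $\sum_i (x_i-\bar x)^2$ need not be monotone. My plan to overcome this is to introduce a degree-reweighted quadratic energy, exploiting the one-step conservation $\sum_i d_i(t)\,x_i(t+1)=\sum_i d_i(t)\,x_i(t)$ that follows from the symmetry of the neighbor relation, and to show it drops by at least $1/\poly(n)$ on every nonstationary step while staying $O(\poly(n))$ across a phase. Coupled with the monotone rightward march of $x_1$ (and, symmetrically, the leftward march of $x_n$) and the merge criterion above, this should force a structural event within $O(n^2)$ steps. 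Pinning down the correct reweighting in the presence of time-varying degrees, and squeezing the per-step decrease down to the level that yields $n^2$ rather than a larger polynomial, is where I expect the real difficulty to lie.
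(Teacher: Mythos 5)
There is a genuine gap: the quantitative engine of the proof is missing. Your proposal reduces everything to showing that, absent a structural event, the system makes definite progress, but the only concrete mechanism you offer is a ``degree-reweighted quadratic energy'' that you admit you have not constructed; that is precisely the hard step, and it is not clear an energy argument can deliver $O(n^3)$ at all. (Even the standard potential $V$ of \cref{thm:potential}, with $V(0)\le n^2$ and per-step drop $4\sum_i|\Delta x_i|^2$, would need some agent to move by $\Omega(n^{-1/2})$ on every nonstationary step to give $O(n^3)$, which is false.) The paper's proof needs no energy function: it tracks the leftmost \emph{non-frozen} agent $\ell$ and its nearest strict right-neighbor $r$, and observes that if $\mathcal{N}_\ell(t)\neq\mathcal{N}_r(t)$ then $r$ has a neighbor $s$ with $x_s(t)-x_\ell(t)>1$, which anchors $r$ so that $x_r(t+1)\ge x_\ell(t)+\frac1n$; hence either $\ell$ already moved right by $\frac1{2n}$ at time $t+1$, or the gap $x_r(t+1)-x_\ell(t+1)\ge\frac1{2n}$ forces $\ell$ to freeze or to move right by $\frac1{2n^2}$ at time $t+2$ (\cref{lem:leftmove}). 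Your ``leftmost-gap identity'' is correct and points in the right direction (each $j\in\mathcal{N}_2\setminus\mathcal{N}_1$ contributes at least $1/d_1$ to the sum), but pushed through naively it only gives a new gap of order $n^{-2}$ and hence progress $n^{-3}$ per step, i.e., an $O(n^4)$ bound; the extra factor of $n$ in the paper comes from the dichotomy above, not from the identity alone.

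A second, smaller problem is your accounting scheme. You propose $O(n)$ phases delimited by splits and merges, each to be bounded by $O(n^2)$ steps, but nothing in your setup supports a per-phase bound of $O(n^2)$: the natural bound from ``$\ell$ moves right by $\Omega(n^{-2})$ every two steps with displacement budget $n$'' is $O(n^3)$ \emph{per phase}, which your multiplication by $O(n)$ phases would turn into $O(n^4)$. The paper avoids this by making the displacement budget \emph{global}: since the index $\ell(t)$ is non-decreasing and the positions are sorted with diameter at most $n$, the total rightward displacement of the leftmost non-frozen agent over the entire history is at most $n$, so the ``move by $\frac1{2n^2}$'' case occurs at most $2n^3$ times in total, and the ``weight increases'' case at most $n$ times in total. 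You should replace the phase decomposition with this global charging argument and supply the two-step progress lemma; as written, the proposal is a plan whose central lemma is unproved.
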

\begin{proof}
Recall that we number the agents such that $x_i(t) \leq x_{i+1}(t)$
for all $t$ and all $i \in [1,n-1]$. 
Suppose the system has not already converged at time $t$, and let
$\ell(t)$ denote the leftmost non-frozen agent, i.e., the least $\ell \geq 1$ such that 
$\set{x_i(t): i \in \mathcal{N}_\ell(t)} \neq \set{x_\ell(t)}$. Note
that at time $t$, agent $\ell(t)$ must have at least one neighbor strictly to its
right and no neighbor strictly to its left (because any neighbor
strictly to the left would violate minimality of $\ell$). 

The following lemma is the heart of our proof.
\begin{lemma}\label{lem:leftmove}
For every $t \geq 0$, by time $t+2$, agent $\ell(t)$ increases
in weight, or gets frozen, or moves to the right by at least
$\frac{1}{2n^2}$. 
\end{lemma}
\begin{proof}
Fix $t$, let $\ell = \ell(t)$, and let $r = \min \set{j: x_j(t) >
  x_\ell(t)} $ be the leftmost agent that is strictly right of agent
$\ell$. Agent $r$ is a neighbor of $\ell$ because $\ell$ has at least
one neighbor strictly to its right. If $\mathcal{N}_\ell(t) =
\mathcal{N}_r(t)$, agent $\ell$ at time $t+1$ moves to the same
location as agent $r$ does and increases its weight. So, suppose otherwise. Since $\ell$ has no
neighbor strictly to its left, $r$ also does not have any neighbor
strictly left of $\ell$. Hence, there must exist $s \in
\mathcal{N}_r(t)\setminus \mathcal{N}_\ell(t)$ strictly to the right
of $r$. 

We now show $x_r(t+1) \geq x_\ell(t) +\frac{1}{n}$. Observe that
$x_s(t) - x_\ell(t) > 1$. So, if $\delta =  x_r(t) - x_\ell(t)$ and $k
= |\mathcal{N}_r(t)|$,  then agent $r$ moves to the left by at most:
\begin{equation*}
\frac{\delta\cdot (k-1) - (1-\delta)\cdot 1}{k} = \delta - \frac{1}{k}
\leq \delta -\frac{1}{n}
\end{equation*}

If $x_\ell(t+1) \geq x_\ell(t)+\frac{1}{2n}$, we are already
done. Otherwise, $x_r(t+1)-x_\ell(t+1) \geq \frac{1}{2n}$. By
\cref{prop:dec}, agent $\ell$ still has no neighbor strictly to its left at
time $t+1$. If $x_r(t+1)-x_\ell(t+1) > 1$, then agent $\ell$ gets
frozen at time $t+1$. Otherwise, $\frac{1}{2n} \leq
x_r(t+1)-x_\ell(t+1) \leq 1$, and so:
$$x_\ell(t+2) - x_\ell(t+1) \geq \frac{1}{2n|\mathcal{N}_\ell(t+1)|}
\geq \frac{1}{2n^2}$$
\end{proof}

We can assume $x_n(0)-x_1(0) \leq n$ without loss of generality,
because otherwise, by \cref{prop:dec}, the system can be decomposed
into independently evolving subsystems. So, $x_n(t) - x_1(t) \leq n$
for all $t$. Now, apply
\cref{lem:leftmove} at $t=0,2,4,\dots$ as long as the system has not
converged and $\ell(t)$ exists. $\ell(t)$ can increase
in weight only at most $n$ times. Also, because $\ell(t)$ is
non-decreasing with $t$, $x_{\ell(t+2)}(t+2)\geq x_{\ell(t)}(t+2)$
and, hence, the third case in \cref{lem:leftmove} can occur only at
most $2n^3$ times.  Thus, after $t>2(n+2n^3)$, \cref{lem:leftmove}
cannot be applied, and the system must have converged.
\end{proof}

\subsection{Noisy neighborhoods}

In this subsection, we study an extension to a noisy version of the
HK model. We consider the following system HK$_\eta$ where $\eta \in 
(0,1)$ is a parameter. There are $n$ agents, and each agent has an
associated {\em left-neighborhood parameter} $\eta_i$ that is in the
interval $(0,\eta)$. The positions at $t=0$ are given, and then, the
positions are updated according to the following rule:
\begin{equation}
x_i(t+1) = \frac{\sum\limits_{j: -1+\eta_i \leq x_j(t) - x_i(t) \leq 1}x_j(t)}{\sum\limits_{j:  -1+\eta_i \leq x_j(t) - x_i(t) \leq 1}1}
\end{equation}
One can interpret the $\eta_i$'s as noise acting on the left side of
each agent's neighborhood. We can prove the following extension of
\cref{thm:1dimub} to HK$_\eta$.
\begin{theorem}\label{thm:noisyub}
For fixed $\eta \in (0,1)$, the HK$_\eta$ system  converges within
$O(n^3)$ time steps. 
\end{theorem}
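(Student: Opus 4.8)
The plan is to mirror the proof of \cref{thm:1dimub}, but the left-noise destroys order preservation (\cref{prop:order})—for instance an agent with a small $\eta_j$ has a long leftward reach and can be pulled left past a neighbor sitting to its left—so instead of tracking a fixed leftmost agent I would track the running minimum $m(t)=\min_i x_i(t)$. First I would record the analogues of the two preliminaries. The \emph{floor} property: no agent ever lies below $m(t)$, so every agent averages only over positions $\ge m(t)$, whence $m(t)$ is non-decreasing (and symmetrically the maximum is non-increasing). The \emph{decomposition} property: if the sorted positions have a gap exceeding $1$, then since each agent's right reach is exactly $1$ while its left reach $1-\eta_i$ is strictly less than $1$, no neighbor relation crosses the gap in either direction; using that the left side's maximum is non-increasing and the right side's minimum non-decreasing, the gap persists. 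This lets me assume the spread is at most $n$.

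The heart of the argument is a structural lemma that replaces order preservation: \emph{the new minimum equals the centroid of the bottom unit window}. Writing $B(t)=\{j:x_j(t)\in[m(t),m(t)+1]\}$ and $\bar x=\tfrac{1}{|B(t)|}\sum_{j\in B(t)}x_j(t)$, I claim $m(t+1)=\bar x$. The agents sitting at $m(t)$ see exactly $B(t)$ (their left reach is cut off by the floor), so they move to $\bar x$, giving $m(t+1)\le\bar x$. Conversely, any agent $j$ averages over $\mathcal N_j(t)$, which relative to $B(t)$ only \emph{deletes} some lowest elements of $B(t)$ (those below $x_j-1+\eta_j$) and \emph{adds} elements above $m(t)+1$; deleting below-average elements and adding elements larger than $\max B(t)\ge\bar x$ can only raise the mean, so $x_j(t+1)\ge\bar x$. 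Thus the minimum is attained precisely by the bottom agents (plus any that merge into them) and nobody can slip below $\bar x$—exactly the control order preservation gave in the symmetric case.

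With this I would prove the two-step progress lemma, the analogue of \cref{lem:leftmove}. Set $c=1-\eta>0$ and $\mu=\bar x-m(t)$. If $\mu\ge c/n$ the minimum already advanced by $c/n$ in one step. Otherwise the $|B(t)|\le n$ nonnegative offsets sum to $|B(t)|\mu<c$, so every agent of $B(t)$ lies within $c$ of $m(t)$; here the \emph{fixed} bound $\eta_j<\eta$ does the essential work, since an offset below $c=1-\eta<1-\eta_j$ forces the agent to still see the floor, making its step purely symmetric-HK. I then classify one-step images: a bottom-window agent confined to $[m(t),m(t)+1]$ sees exactly $B(t)$ and merges to $\bar x$; a bottom-window agent that reaches some $s$ with $x_s>m(t)+1$, and likewise any agent already above $m(t)+1$ (which is its own such witness), lands at $\ge m(t)+\tfrac1n$ by the same averaging bound used for agent $r$ in \cref{lem:leftmove}. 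Hence at time $t+1$ there is a \emph{clean gap}: the new bottom cluster sits at $\bar x$ while every other agent is at position at least $m(t)+\tfrac1n$, i.e. at least $\eta/n$ above $\bar x$. Applying the centroid lemma again at $t+1$: if no agent lies within $1$ of $\bar x$ the bottom cluster has frozen; otherwise some agent lies in $B(t+1)$ at offset $>\eta/n$, and the lemma gives $m(t+2)\ge\bar x+\eta/n^2$.

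Finally I would run the counting as in \cref{thm:1dimub}, tracking the minimum over the not-yet-frozen (leftmost active) agents: it is non-decreasing, confined to an interval of length at most $n$, advances by at least $\eta/n^2$ whenever a two-step block yields progress, and jumps upward past a gap at each freeze; since freezes occur at most $n$ times and progress steps at most $n/(\eta/n^2)=n^3/\eta$ times, the total is $O(n^3)$ for fixed $\eta$. I expect the main obstacle to be exactly the loss of \cref{prop:order}: everything rests on the centroid lemma and on the fact that proximity within $c=1-\eta$ restores symmetric behaviour, and the delicate step is verifying the clean gap—that no agent, in particular no far agent with a tiny $\eta_j$ letting it peek back toward the bottom, can land in the forbidden interval $(\bar x,\,m(t)+\tfrac1n)$.
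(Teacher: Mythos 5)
Your argument is correct, and it takes a genuinely different route from the paper's. The paper keeps the structure of \cref{lem:leftmove} essentially intact: it defines $\ell(t)$ as the leftmost agent that neighbors some agent strictly to its right, takes $r$ to be its leftmost such neighbor, and splits on whether $r$ reciprocates, i.e.\ on whether $x_r(t)-x_\ell(t)\le 1-\eta_r$. If not, $\ell$ jumps right by at least $(1-\eta)/n$ outright; if so, the pair $\ell,r$ interacts symmetrically and the original two-step argument is reused verbatim. Your proof instead globalizes: you abandon agent identities entirely (sidestepping the failure of \cref{prop:order} more thoroughly than the paper does), track the running minimum $m(t)$, and prove the stronger structural fact that $m(t+1)$ equals the centroid $\bar x$ of the bottom unit window $B(t)$; your dichotomy is then ``$\bar x$ is far from the floor, hence one-step progress'' versus ``all of $B(t)$ lies within $1-\eta$ of the floor, hence every bottom agent still sees the floor and takes a purely symmetric step.'' Both proofs exploit the same phenomenon---either the asymmetry itself forces macroscopic rightward drift, or the configuration is effectively symmetric---but yours buys a cleaner invariant (the clean gap at time $t+1$, with every non-merged agent at least $\eta/n$ above the new minimum) at the cost of re-deriving the two-step progress from scratch, while the paper buys brevity by reducing to \cref{lem:leftmove}. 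Two small points of hygiene: your phrase ``deleting below-average elements'' should read ``deleting a lowest prefix,'' since the deleted elements need not individually lie below $\bar x$; the mean still cannot decrease because the deleted prefix has mean at most that of the retained suffix. And the final counting must be run on the active (non-peeled) subsystem, which your persistent-gap observation does justify since each peeled bottom cluster is separated from the rest by a gap exceeding $1$ that never closes.
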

\begin{proof}
The proof proceeds similarly to that of \cref{thm:1dimub}. 
We say that agent $j$ is a neighbor of agent $i$ at time $t$ if  $-1 +
\eta_i \leq x_j(t) - x_i(t) \leq 1$. 
Conceptually, the main complicating issue is that being neighbors is no longer a
symmetric relation. Also, \cref{prop:order} is no longer true as order
may not be preserved by the dynamics.

Let $\ell(t)$ denote the leftmost agent at time $t$ that neighbors at least one
agent positioned strictly to its right. Notice that this implies there
is no $k$ such that $0<x_{\ell(t)}-x_k \leq 1$, because agent $k$ would
neighbor $\ell(t)$ and hence violate the minimality of $\ell(t)$.
 We prove the following analog of \cref{lem:leftmove}.
\begin{lemma}\label{lem:leftmoveext}
For every $t \geq 0$, by time $t+2$, agent $\ell(t)$ increases
in weight, or gets frozen, or moves to the right by at least
$\frac{1}{2n^2}$. 
\end{lemma}
\begin{proof}
Let $\ell = \ell(t)$ and let $r$ be the leftmost neighbor of $\ell$. 
If $x_r(t) - x_\ell(t) > 1-\eta_r$, then $x_\ell(t+1) > \nfrac{(1-\eta_r)}{n}
\geq \nfrac{(1-\eta)}{n}$, and we are already done. So, assume
otherwise. 

Then, agent $\ell$ is a neighbor of agent $r$. Since $\ell$ has no
neighbor strictly to its left within a distance of $1$, $r$ also has no
neighbor strictly to the left of $\ell$. Now, we can use the same
argument as in the proof of \cref{lem:leftmove} to argue that either
agent $\ell$ freezes at time $t+1$ or at time $t+2$, it moves to the
right by at least $\frac{1}{2n^2}$.
\end{proof}
The rest of the proof can be finished in exactly the same way as the
proof of \cref{thm:1dimub}.
\end{proof}

By symmetry, \cref{thm:noisyub} also applies when the right side of
each agent's neighborhood is perturbed and the left side is fixed. The
case when both sides are perturbed remains open; convergence for such
heterogeneous HK systems is conjectured \cite{MB11}.

\section{Convergence in Higher Dimensions}

In this section, we consider the HK system in $d$ dimensions with $n$ agents
and $d \ge 2$. We show that the convergence time is polynomial in both
$n$ and $d$. 

The proof follows from a sequence of lemmas. The first lemma asserts
that there is some vector $\vec{a}$ such that taking the 
projection of agents on $\vec{a}$ does not bring two non-neighbors
too close together.
\begin{lemma}\label{lem:gooddir}
  For every $t$, there exists a unit vector $\vec{a}$ such that for any two
  agents $i, j$, we have $\left|\vec{a}\cdot
  \frac{\vec{x}_i(t)-\vec{x}_j(t)}{\|\vec{x}_i(t)-\vec{x}_j(t)\|}\right| = \Omega(n^{-2}d^{-1})$. 
\end{lemma}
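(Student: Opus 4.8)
The plan is to produce the good direction by a volume / union-bound argument over a random choice of $\vec{a}$. First I would reduce to finitely many constraints: ranging over all pairs of agents occupying distinct positions, there are at most $\binom{n}{2} < n^2/2$ distinct unit vectors $\vec{u}_{ij} := \Paren{\vec{x}_i(t)-\vec{x}_j(t)}/\|\vec{x}_i(t)-\vec{x}_j(t)\|$, and it suffices to produce a single unit vector $\vec{a}$ that is not nearly orthogonal to any of them. For a fixed unit vector $\vec{u}$ and a threshold $\eps>0$, call $\vec{a}$ \emph{bad for $\vec{u}$} if $\Abs{\vec{a}\cdot\vec{u}} < \eps$; geometrically the bad $\vec{a}$'s form a thin slab of half-width $\eps$ around the hyperplane orthogonal to $\vec{u}$, and I want to avoid all such slabs simultaneously.

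The crux is to bound how large each slab is relative to the space of candidate directions. I would draw $\vec{a}$ uniformly from the unit ball $B^d$ (normalizing only at the end). Writing $\vec{u} = \vec{e}_1$ without loss of generality, the slab $\set{\vec{a}\in B^d : \Abs{\vec{a}\cdot\vec{u}}<\eps}$ has, for each fixed first coordinate, a cross-section that is a $(d-1)$-ball of radius at most $1$; integrating over the first coordinate in $(-\eps,\eps)$ gives volume at most $2\eps\cdot\vol(B^{d-1})$. Dividing by $\vol(B^d)$ and using the standard estimate $\vol(B^{d-1})/\vol(B^d) = \Theta(\sqrt{d}) = O(d)$, the probability that a uniformly random $\vec{a}\in B^d$ is bad for a fixed $\vec{u}$ is $O(\eps d)$.

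A union bound over the at most $n^2/2$ relevant directions then shows that $\vec{a}$ is bad for some pair with probability at most $O(\eps d n^2)$. Choosing $\eps = \Theta(n^{-2}d^{-1})$ makes this strictly less than $1$, so some $\vec{a}\in B^d$ satisfies $\Abs{\vec{a}\cdot\vec{u}_{ij}}\ge\eps$ for every pair; note such an $\vec{a}$ is necessarily nonzero (the origin lies in every slab). Finally I would normalize: for $\vec{a}' = \vec{a}/\|\vec{a}\|$ we get $\Abs{\vec{a}'\cdot\vec{u}_{ij}} = \Abs{\vec{a}\cdot\vec{u}_{ij}}/\|\vec{a}\| \ge \Abs{\vec{a}\cdot\vec{u}_{ij}} \ge \eps$ since $\|\vec{a}\|\le 1$, which yields the desired unit vector with $\Abs{\vec{a}'\cdot\vec{u}_{ij}} = \Omega(n^{-2}d^{-1})$ for all pairs.

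The only non-routine step is the cross-sectional volume bound, i.e.\ quantifying how spread out the projection of a random direction onto a fixed axis is (the ratio $\vol(B^{d-1})/\vol(B^d)$); everything else is counting and a union bound. An equivalent and arguably cleaner route, which avoids the ball-volume estimate, is to take $\vec{a} = \vec{g}/\|\vec{g}\|$ for a standard Gaussian $\vec{g}$, using that $\vec{g}\cdot\vec{u}\sim N(0,1)$ has density at most $1/\sqrt{2\pi}$ (so $\Pr[\Abs{\vec{g}\cdot\vec{u}}<\delta]\le\delta\sqrt{2/\pi}$) together with $\|\vec{g}\|\le 2\sqrt{d}$ holding with constant probability; choosing $\delta = \Theta(n^{-2})$ and combining these two events gives the same conclusion (in fact $\Omega(n^{-2}d^{-1/2})$, which a fortiori implies the stated bound).
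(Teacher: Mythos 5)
Your proof is correct and takes essentially the same route as the paper: a volume (union-bound) argument over the space of directions, excluding for each of the $\binom{n}{2}$ pairs the thin region of nearly-orthogonal directions and checking that the excluded regions cannot cover everything when the threshold is $\Theta(n^{-2}d^{-1})$. The paper measures spherical bands on $S^{d-1}$ directly where you measure slabs in the ball $B^d$ and normalize afterwards, but this is only a cosmetic difference (and your observation that the Gaussian variant yields the slightly stronger $\Omega(n^{-2}d^{-1/2})$ is a nice bonus, though not needed for the lemma as stated).
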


\begin{proof}
  Let $\vec{d}_{i,j}=\frac{\vec{x}_i(t)-\vec{x}_j(t)}{\|\vec{x}_i(t)-\vec{x}_j(t)\|}$. We can view $\vec{d}_{i,j}$ as a point
  on $S^{d-1}$, the unit ball in $\R^d$. The set of points on $S^{d-1}$ with dot product with
  $\vec{d}_{i,j}$ of absolute value at most $O(n^{-2}d^{-1})$ has area
  $\frac{\pi^{(d-2)/2}}{n^2 d\cdot \Gamma((d-2)/2)}$. The area of
  $S^{d-1}$ is $\frac{2\pi^{d/2}}{\Gamma(d/2)}$. There are $n\choose
  2$ pairs $i,j$, and so, by a volume argument, there exists a point $\vec{a}\in
  S^{d-1}$ such that for all $i$ and $j$, $|\vec{d}_{i,j}\cdot \vec{a}| = \Omega(n^{-2}d^{-1})$. 
\end{proof}

The next lemma analyzes the one-dimensional system formed by
projecting onto $\vec{a}$ and shows a lower bound on the total
movement in each step.
\begin{lemma}\label{lem:movefar}
  Assume that at time $t$, the system has not converged. One of the following cases happens.
  \begin{itemize}
  \item Two agents move to the same location at time $t+1$.
  \item Some agent moves by at least $\Omega(n^{-4}d^{-1})$
  \end{itemize}
\end{lemma}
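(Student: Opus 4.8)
The plan is to reduce everything to a one-dimensional analysis by projecting the agents onto the good direction $\vec{a}$ produced by \cref{lem:gooddir}. Write $y_i \defeq \vec{a}\cdot\vec{x}_i(t)$ for the projected position of agent $i$; because the update rule is linear, $y_i(t+1)=\frac{1}{|\mathcal{N}_i(t)|}\sum_{j\in\mathcal{N}_i(t)}y_j(t)$ is exactly the average of the projected positions of $i$'s neighbors. Two facts drive the whole argument. First, the projection is $1$-Lipschitz, so $\|\vec{x}_i(t+1)-\vec{x}_i(t)\|\geq|y_i(t+1)-y_i(t)|$; hence it suffices to exhibit an agent whose \emph{projected} coordinate moves by $\Omega(n^{-4}d^{-1})$, or a merging pair. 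Second, writing $\beta = \Omega(n^{-2}d^{-1})$ for the bound of \cref{lem:gooddir}, the values $y_i$ are pairwise distinct and, crucially, any two agents $i,j$ that are \emph{not} neighbors in $\R^d$ (so $\|\vec{x}_i-\vec{x}_j\|>1$) satisfy $|y_i-y_j|>\beta$. This separation is the device that replaces the inequality ``$x_s-x_\ell>1$'' used in the proof of \cref{lem:leftmove}, which is unavailable here because projecting can compress $d$-dimensional distances.

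I would then mimic \cref{lem:leftmove}. Order the agents by projected value and let $\ell$ be the leftmost agent that has a neighbor at a strictly larger projected value. First I would check that $\ell$ exists precisely when the system has not converged: if no agent had such a neighbor, an induction starting from the leftmost projected agent shows every agent is isolated and nothing moves. By minimality of $\ell$, neither $\ell$ nor its nearest right-neighbor $r$ (set $\delta \defeq y_r - y_\ell>0$, noting $r\in\mathcal{N}_\ell(t)$) can have a neighbor at projected value below $y_\ell$, since such a neighbor would itself be an active agent lying to the left of $\ell$. In particular every neighbor of $\ell$, and every neighbor of $r$, has projected value at least $y_\ell$.

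Now split on the size of $\delta$, all within the single step $t\to t+1$. If $\delta \geq \beta/(2n)$, then since $r$ sits at $y_\ell+\delta$ and every neighbor of $\ell$ is at projected value $\geq y_\ell$, agent $\ell$ moves right by at least $\delta/|\mathcal{N}_\ell(t)| \geq \beta/(2n^2) = \Omega(n^{-4}d^{-1})$, and we are done. If $\delta < \beta/(2n)$, compare $\mathcal{N}_\ell(t)$ with $\mathcal{N}_r(t)$: when they are equal, $\ell$ and $r$ land at the same point, which is the merging case. Otherwise choose an agent $m$ in their symmetric difference. If $m\in\mathcal{N}_r\setminus\mathcal{N}_\ell$, then $m$ is $d$-dimensionally far from $\ell$, so $y_m>y_\ell+\beta$; since all neighbors of $r$ lie at $\geq y_\ell$, this forces $y_r(t+1)>y_\ell+\beta/n$, and as $y_r(t)=y_\ell+\delta$ with $\delta<\beta/(2n)$, agent $r$ moves right by more than $\beta/(2n)$. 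Symmetrically, if $m\in\mathcal{N}_\ell\setminus\mathcal{N}_r$ then $m$ is far from $r$, which (using $\delta<\beta$ and $y_m\geq y_\ell$) forces $y_m>y_r+\beta$ and moves $\ell$ right by more than $\beta/n$. In either subcase some agent moves by $\Omega(n^{-3}d^{-1})$.

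The main obstacle is exactly the loss incurred by the projection. In one dimension the separating agent $s$ yields the clean bound $x_s-x_\ell>1$, whereas here \cref{lem:gooddir} only guarantees a projected gap of $\beta=\Omega(n^{-2}d^{-1})$, and there is no decomposability property like \cref{prop:dec} to invoke across time. The argument must therefore close in a single step, and the delicate point is the choice of threshold at $\delta=\beta/(2n)$: above it, $\ell$'s own drift is already large enough, while below it the gap $\delta$ is negligible so that even the weak $\beta$-separation of the symmetric-difference agent suffices to push $r$ (or $\ell$) rightward by $\Omega(n^{-3}d^{-1})$. The bookkeeping I would need to get right is verifying that the symmetric difference is nonempty unless $\ell$ and $r$ merge, and that in each direction it produces an agent that is genuinely far (hence projected-far by the good direction) from one of the two tracked agents.
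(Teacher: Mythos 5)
Your proof is correct and follows essentially the same route as the paper's: project onto the direction from \cref{lem:gooddir}, take the leftmost projected agent with a neighbor strictly to its right, and split on whether that neighbor is projected-far (forcing $\ell$ to drift by $\Omega(n^{-4}d^{-1})$) or projected-close (in which case either $\ell$ and $r$ merge or a non-neighbor in the symmetric difference of $\mathcal{N}_\ell,\mathcal{N}_r$ supplies the $\Omega(n^{-2}d^{-1})$ projected gap that pulls one of them by $\Omega(n^{-3}d^{-1})$). Your merge criterion $\mathcal{N}_\ell=\mathcal{N}_r$ is in fact a slightly cleaner bookkeeping of the paper's ``no neighbor of $j$ has an outside neighbor'' case, but the underlying argument is the same.
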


\begin{proof}
Let $\vec{a}$ be a unit vector satisfying \cref{lem:gooddir} and $j$
be the agent with minimum $\vec{x}_j(t)\cdot \vec{a}$ among agents with neighbor
at time $t$. Consider two cases. 

\paragraph{Case 1.} ${j}$ has a neighbor $i$ such that
$(\vec{x}_i(t)-\vec{x}_j(t))\cdot \vec{a} = \Omega(n^{-3}d^{-1})$. Because
$(\vec{x}_k(t)-\vec{x}_j(t))\cdot \vec{a} \ge 0~\forall k$, $j$ must move by
$\Omega(n^{-4}d^{-1})$. 

\paragraph{Case 2.} All neighbors $i$ of $j$ satisfy
$(\vec{x}_i(t)-\vec{x}_j(t))\cdot \vec{a} = O(n^{-3}d^{-1})$. If none of them
has any neighbor that is not $j$ or $j$'s neighbors, then they all
move to the same location at time $t+1$. Otherwise, some neighbor $i$
of $j$ has a neighbor $k$ that is not a neighbor of $j$. Then
$(\vec{x}_k(t)-\vec{x}_j(t))\cdot \vec{a} =\Omega(n^{-2}d^{-1})$. We have  
$$\vec{x}_i(t+1)\cdot \vec{a} \ge \vec{x}_i(t) \cdot \vec{a} +
\frac{(\vec{x}_k(t) -\vec{x}_j(t))\cdot \vec{a} +
  (n-1)(\vec{x}_j(t)-\vec{x}_i(t))\cdot \vec{a}}{n} \ge
\vec{x}_i(t)+\Omega(n^{-3}d^{-1})$$ 
Thus $i$ moves by at least $\Omega(n^{-3}d^{-1})$.
\end{proof}

We now use the following special case of a result from~\cite{RMF08}
to show the existence of a potential function which is strictly
decreasing as long as some agent moves. The proof is included for
completeness. 
\begin{theorem}[\cite{RMF08}, Theorem 2]\label{thm:potential}
  Let $f_{i, j}^t : \R^d \times \R^d \rightarrow \R$ be defined as
  follows: $f_{i,j}^t(\vec{z},\vec{z'}) = \|\vec{z}-\vec{z'}\|^2$ if
  $\|\vec{x}_i(t)-\vec{x}_j(t)\| < 1$ and $f_{i,j}^t(\vec{z},\vec{z'}) =
  1$ otherwise.
  Let $V(t) = \sum_{i, j \in [n]} f_{i,j}^t(\vec{x}_i(t),
  \vec{x}_j(t))$. Then $V$ is non-increasing along the trajectory of
  the system. In fact:
\begin{align*}
V(t) - V(t+1) &\ge \sum_{i, j: f_{i,j}^t \not\equiv 1} f_{i,j}^t (\vec{x}_i(t+1)-\vec{x}_i(t),\vec{x}_j(t)-\vec{x}_j(t+1))\\
&\ge \sum_{i} 4\|\vec{x}_i(t+1)-\vec{x}_i(t)\|^2
\end{align*}
\end{theorem}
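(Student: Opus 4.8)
The plan is to reduce both inequalities to a single algebraic identity governing one averaging step. First I would observe the self-consistency of the potential: for every pair, $f^t_{i,j}(\vec x_i(t),\vec x_j(t))=\min(\|\vec x_i(t)-\vec x_j(t)\|^2,1)$, since when $i,j$ are neighbors the distance is at most $1$ so the squared distance is the smaller value, and when they are not neighbors $f^t_{i,j}=1$ while the squared distance is at least $1$; the identical statement holds at time $t+1$ with $f^{t+1}$. Thus both $V(t)$ and $V(t+1)$ are sums of capped squared distances, and I would split $V(t)-V(t+1)$ according to whether $i,j$ are neighbors at time $t$. For non-neighbor pairs the term at time $t$ is $1$ and the term at time $t+1$ is at most $1$, so these contribute nonnegatively and are discarded. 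For neighbor pairs the term at time $t$ is exactly $\|\vec x_i(t)-\vec x_j(t)\|^2$ and the term at time $t+1$ is at most $\|\vec x_i(t+1)-\vec x_j(t+1)\|^2$, giving
$$V(t)-V(t+1)\ \ge\ \sum_{i}\sum_{j\in\mathcal N_i(t)}\big(\|\vec x_i(t)-\vec x_j(t)\|^2-\|\vec x_i(t+1)-\vec x_j(t+1)\|^2\big).$$
It then suffices to show the right-hand side equals $\sum_{i}\sum_{j\in\mathcal N_i(t)}\|\vec u_i+\vec u_j\|^2$, where $\vec u_i=\vec x_i(t+1)-\vec x_i(t)$.

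The heart of the argument is this identity, which I would prove coordinate by coordinate, since both the averaging update and the sum over the fixed neighbor graph split across coordinates. Writing scalars $x_i,y_i,u_i=y_i-x_i$ for a single coordinate, the elementary expansion gives $(x_i-x_j)^2-(y_i-y_j)^2=-2(u_i-u_j)(x_i-x_j)-(u_i-u_j)^2$. The crucial input from the dynamics is that the averaging rule rewrites as $\sum_{j\in\mathcal N_i(t)}(x_i-x_j)=-|\mathcal N_i(t)|\,u_i$ for every $i$. Summing the expansion over the symmetric neighbor graph and applying this relation yields $\sum_i\sum_{j\in\mathcal N_i(t)}(u_i-u_j)(x_i-x_j)=-2\sum_i|\mathcal N_i(t)|\,u_i^2$; feeding this back, all cross terms cancel and the sum collapses to $2\sum_i|\mathcal N_i(t)|\,u_i^2+2\sum_i\sum_{j\in\mathcal N_i(t)}u_iu_j$, which is precisely $\sum_i\sum_{j\in\mathcal N_i(t)}(u_i+u_j)^2$. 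Summing over coordinates restores the vector identity. I expect this to be the main obstacle: it is exactly where the bidirectionality (symmetry of $\mathcal N_i(t)=\{j\}\iff i\in\mathcal N_j(t)$) and the particular averaging weights are used, so the cancellations must be tracked carefully, and a directional or reweighted variant would break here.

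Finally, the second inequality is immediate. Each agent is its own neighbor, so every diagonal pair $(i,i)$ appears in the sum and contributes $\|\vec u_i+\vec u_i\|^2=4\|\vec u_i\|^2$, while every remaining term $\|\vec u_i+\vec u_j\|^2$ is nonnegative. Hence
$$\sum_{i}\sum_{j\in\mathcal N_i(t)}\|\vec u_i+\vec u_j\|^2\ \ge\ 4\sum_i\|\vec u_i\|^2\ =\ 4\sum_i\|\vec x_i(t+1)-\vec x_i(t)\|^2,$$
which closes the chain and establishes both the stated inequality and monotonicity of $V$.
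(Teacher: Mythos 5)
Your proof is correct, and its skeleton matches the paper's: both arguments first observe that $f^{t+1}_{i,j}(\vec{x}_i(t+1),\vec{x}_j(t+1)) \le f^{t}_{i,j}(\vec{x}_i(t+1),\vec{x}_j(t+1))$ for every pair (your ``capped squared distance'' remark), reducing the drop $V(t)-V(t+1)$ to an \emph{exact} identity for the quadratic form over time-$t$ neighbor pairs, and both then extract the factor $4$ from the diagonal pairs $(i,i)$. The genuine difference is in how that identity is proved. The paper encodes the update as $\vec{x}(t+1)=A^{-1}B\vec{x}(t)$, with $A$ the diagonal degree matrix and $B$ the adjacency matrix of the communication graph, and verifies the matrix factorization $F(\vec{x},\vec{x})-F(A^{-1}B\vec{x},A^{-1}B\vec{x}) = 2\vec{x}^T(I-BA^{-1})(A+B)(I-A^{-1}B)\vec{x} = F\bigl((I-A^{-1}B)\vec{x},-(I-A^{-1}B)\vec{x}\bigr)$. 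You instead work coordinate by coordinate, expand $(x_i-x_j)^2-(y_i-y_j)^2=-2(u_i-u_j)(x_i-x_j)-(u_i-u_j)^2$, and use the averaging relation $\sum_{j\in\mathcal{N}_i(t)}(x_i-x_j)=-|\mathcal{N}_i(t)|\,u_i$ together with the symmetry of the neighbor relation to collapse the cross terms; I checked that $\sum_i\sum_{j\in\mathcal{N}_i(t)}(u_i-u_j)(x_i-x_j)=-2\sum_i|\mathcal{N}_i(t)|u_i^2$ and that the remaining sum equals $2\sum_i|\mathcal{N}_i(t)|u_i^2+2\sum_i\sum_{j\in\mathcal{N}_i(t)}u_iu_j=\sum_i\sum_{j\in\mathcal{N}_i(t)}(u_i+u_j)^2$, so the computation is sound. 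The two inputs are the same in both versions (symmetry of $B$ and the row-stochastic averaging weights), but your summation-by-parts route is more elementary, avoids invoking $\vec{x}(t+1)=\argmin_{\vec{y}}F(\vec{x}(t),\vec{y})$ and the inverse $A^{-1}$, and pinpoints exactly where bidirectionality is used; the paper's matrix form is more compact and displays the positive semidefinite structure $(I-BA^{-1})(A+B)(I-A^{-1}B)$ at a glance (its intermediate lines contain sign typos, though the final identity is right). Your version in fact handles slightly more cleanly the boundary case of pairs at distance exactly $1$, which participate in the averaging but have $f^t_{i,j}\equiv 1$ under the theorem's strict-inequality definition.
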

\begin{proof}
First, notice that $f_{i,j}^{t+1}(\vec{x}_i(t+1), \vec{x}_j(t+1)) \le f_{i,j}^t (\vec{x}_i(t+1), \vec{x}_j(t+1))~\forall i,j$. Next, we will show

$$\sum_{i,j} f_{i,j}^t (\vec{x}_i(t), \vec{x}_j(t)) - f_{i,j}^t (\vec{x}_i(t+1), \vec{x}_j(t+1))= \sum_{i, j \in [n]: f_{i,j}^t \not\equiv 1} f_{i,j}^t (\vec{x}_i(t+1)-\vec{x}_i(t),\vec{x}_j(t)-\vec{x}_j(t+1))$$

Let $F:(\R^d)^n \times (\R^d)^n \rightarrow \R$ be defined as $F(\vec{x},\vec{y})
= \sum_{i,j:f_{i,j}\not\equiv 1} f_{i,j}^t (\vec{x}_i, \vec{y}_j)$. Then
$\vec{x}(t+1) = \argmin_\vec{y} F(\vec{x}(t), \vec{y})$. We can also view $F$ as 2 matrices
$A,B$ and $F(\vec{x},\vec{y}) = \vec{x}^T A \vec{x} + \vec{y}^T A \vec{y} -2\vec{x}^T B \vec{y}$ where $A,B$ are
symmetric and $A$ is positive semidefinite. We have $\vec{x}(t+1) = A^{-1}B\vec{x}(t)$. 
\begin{align*}
F(\vec{x}, \vec{x}) - F(A^{-1}B\vec{x}, A^{-1}B\vec{x}) &= 2\vec{x}^T(B A^{-1}(A-B)A^{-1}B + A - B)\vec{x}\\
&= 2\vec{x}^T(B A^{-1}B - BA^{-1}BA^{-1}B + A - B)\vec{x}\\
&= 2\vec{x}^T(I-B A^{-1})(A+B)(I-A^{-1}B)\vec{x}\\
&= F((I-A^{-1}B)\vec{x}, -(I-A^{-1}B)\vec{x})
\end{align*}
\end{proof}

\begin{theorem}
The system converges in $\poly(n,d)$ time.
\end{theorem}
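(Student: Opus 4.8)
The plan is to fuse the three preceding lemmas into a single counting argument: \cref{thm:potential} supplies a potential $V$ that drops by a fixed polynomial amount whenever some agent moves appreciably, while \cref{lem:movefar} guarantees that at every unconverged step either such an appreciable move happens or two agents merge, and merges can happen only a bounded number of times. First I would pin down the range of the potential. Since each $f_{i,j}^t$ takes values in $[0,1]$ (it equals a squared distance below $1$ for neighbors and equals $1$ otherwise, and vanishes on the diagonal), we have $0 \le V(t) \le n^2$ for all $t$, and by \cref{thm:potential} $V$ is non-increasing along the trajectory.

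Next I would analyze an arbitrary time $t$ at which the system has not yet converged and invoke \cref{lem:movefar}. In the second case of that lemma some agent moves by at least $\Omega(n^{-4}d^{-1})$, so the displacement bound $V(t)-V(t+1) \ge \sum_i 4\|\vec{x}_i(t+1)-\vec{x}_i(t)\|^2$ of \cref{thm:potential} forces $V(t)-V(t+1) = \Omega(n^{-8}d^{-2})$. Because $V$ is confined to $[0,n^2]$ and never increases, this case can be triggered at most $O(n^2)/\Omega(n^{-8}d^{-2}) = O(n^{10}d^2)$ times in total.

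It remains to bound the first case, in which two agents occupying distinct positions at time $t$ become co-located at time $t+1$. The key observation is that co-location is permanent: two agents sharing a point have identical neighborhoods and hence identical update rules forever after, so they never separate. Consequently the number of distinct occupied positions is non-increasing in $t$, and each occurrence of the first case strictly decreases it; since it starts at no more than $n$, the first case arises at most $n-1$ times. Summing the two bounds, the system can remain unconverged for at most $O(n^{10}d^2) + (n-1) = \poly(n,d)$ steps, which is exactly the claim.

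The genuinely hard work is already discharged in \cref{lem:gooddir,lem:movefar}: producing a projection direction that keeps non-neighbors quantitatively apart, and turning ``not yet converged'' into an explicit $\Omega(n^{-4}d^{-1})$ lower bound on some agent's displacement. The only point in this final step that needs care is checking that the first case of \cref{lem:movefar} truly shrinks the count of distinct positions rather than re-merging already coincident agents, and this is precisely what the permanence-of-co-location observation secures.
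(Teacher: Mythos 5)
Your proposal is correct and follows essentially the same route as the paper: bound $V$ in $[0,n^2]$, use \cref{lem:movefar} together with the displacement inequality in \cref{thm:potential} to charge each non-merging step a potential drop of $\Omega(n^{-8}d^{-2})$, and bound merging steps by $n$, yielding $O(n^{10}d^2)$ overall. Your added justification that co-location is permanent (so merges can occur at most $n-1$ times) is a detail the paper asserts without proof, and it is correct.
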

\begin{proof}
  There are at most $n$ time steps where two agents move to the same
  place. In all other time steps, by \cref{lem:movefar}, some
  agent moves by at least $\Omega(n^{-4}d^{-1})$. We have $V(0) \le
  n^2$ and $V(t)\ge 0~\forall t$. Therefore, by
  \cref{thm:potential}, the number of time steps before the
  system converges is $O(n^{10}d^2)$. 
\end{proof}

\section{Lower bound}

In this section, we give an instance of an HK system 
that requires $\Omega(n^2)$ steps to converge. Our example is in two
dimensions, and we know of no example that takes longer time even
when the number of dimensions is large. The previous best lower bound on
the convergence time \cite{MBCF07} was $\Omega(n)$, which is achieved
by $n$ points on a line with unit distance spacing between each pair
of consecutive points. This is still the worst example we know of in
one dimension.

Our lower bound is achieved by the following system. There are $n$ agents in the
system and initially, they are located at vertices of a regular
$n$-gon with side length $l_1 = 1$. Label the agents clockwise around
the polygon from $0$ to $n-1$. Let $O$ be the center of the polygon
and $A_i$ be the location of agent $i$. For notational convenience,
all index computation is done modulo $n$ (so if $i=0$ then $A_{i-1}$
is $A_{n-1}$). 
\begin{theorem}
The system described above requires at least $\Omega(n^2)$ steps to converge.
\end{theorem}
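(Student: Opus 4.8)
The plan is to exploit the rotational symmetry of the initial configuration. The regular $n$-gon is invariant under rotation by $2\pi/n$ about $O$, and since the averaging rule commutes with rigid rotations, this symmetry is preserved for all time; hence at every step $t$ the agents remain the vertices of a regular $n$-gon centered at $O$. Write $\alpha = 2\pi/n$ and let $\rho_t$ be the circumradius, so the side length is $2\rho_t\sin(\alpha/2)$ and the normalization $l_1=1$ gives $\rho_0 = \frac{1}{2\sin(\alpha/2)} \ge \frac{n}{2\pi}$ (using $\sin(\alpha/2)=\sin(\pi/n)\le \pi/n$). The distance between agents $i$ and $i+k$ is $2\rho_t\sin(k\alpha/2)$, so agent $i$'s neighborhood is a symmetric arc $\{i-m_t,\dots,i+m_t\}$, where $m_t$ is the largest $k$ with $2\rho_t\sin(k\alpha/2)\le 1$. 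By reflection symmetry across the line $OA_i$, agent $i$ moves radially inward to $\rho_{t+1}=c_t\rho_t$, where the Dirichlet-kernel identity gives
\begin{equation*}
c_t = \frac{1}{2m_t+1}\Paren{1 + 2\sum_{k=1}^{m_t}\cos(k\alpha)} = 1 - \frac{4\sum_{k=1}^{m_t}\sin^2(k\alpha/2)}{2m_t+1}.
\end{equation*}

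The heart of the argument is one clean estimate on the per-step decrease of $\rho_t^2$. As long as $\rho_t > \half$ the diameter exceeds $1$, so not all agents are mutual neighbors, which keeps $m_t < n/2$ and hence $0 < c_t \le 1$; moreover every neighbor satisfies the distance constraint $2\rho_t\sin(k\alpha/2)\le 1$, i.e. $\sin^2(k\alpha/2) \le \frac{1}{4\rho_t^2}$ for $1\le k\le m_t$. Substituting this into the formula for $c_t$ yields $1 - c_t \le \frac{m_t}{(2m_t+1)\rho_t^2}\le \frac{1}{2\rho_t^2}$, and therefore
\begin{equation*}
\rho_t^2 - \rho_{t+1}^2 = \rho_t^2\paren{1-c_t^2} \le 2\rho_t^2\paren{1-c_t} \le 1.
\end{equation*}

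Finally I would convert this into the lower bound. While $\rho_t > \half$ the adjacent vertices stay within distance $1$, so every agent has a neighbor at a different location and the system has not converged; letting $T^\ast$ be the first time $\rho_t \le \half$, the true convergence time $T$ satisfies $T \ge T^\ast$. Telescoping the estimate gives $\rho_0^2 - \rho_{T^\ast}^2 \le T^\ast$, and since $\rho_{T^\ast}^2 \le \tfrac14$ we obtain $T^\ast \ge \rho_0^2 - \tfrac14 \ge \frac{n^2}{4\pi^2} - \tfrac14 = \Omega(n^2)$. The step demanding the most care—the main obstacle—is the bookkeeping around the neighborhood structure: checking that the rotational symmetry is genuinely preserved so the polygon stays regular, that $m_t\ge 1$ (the polygon really keeps shrinking) while $m_t < n/2$ in the relevant range (so $c_t>0$, with no antipodal double-counting in the Dirichlet formula), and pinning down the exact collapse threshold for even versus odd $n$. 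Once these structural facts are in place, the single inequality $\rho_t^2-\rho_{t+1}^2\le 1$ does all of the quantitative work.
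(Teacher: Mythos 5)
Your proof is correct, but it takes a genuinely different route from the paper's. The paper fixes the time horizon $t\le n^2/28$ in advance and proves by induction that the side length stays at least $(1-14/n^2)^t\approx e^{-1/2}$, which lets it verify explicitly that each agent's neighborhood is only $\{i-1,i,i+1\}$ and that the polygon therefore contracts by a factor $1-\Theta(n^{-2})$ per step. You instead avoid controlling the neighborhood size at all: after establishing by equivariance that the configuration stays a regular $n$-gon and that each neighborhood is a symmetric arc $\{i-m_t,\dots,i+m_t\}$, you use the constraint that every neighbor is within distance $1$ to show the single inequality $\rho_t^2-\rho_{t+1}^2\le 1$, valid for any $m_t$, and telescope. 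This is arguably more robust (it covers the whole trajectory down to circumradius $\half$ rather than a prescribed window, and would tolerate perturbations of the neighborhood structure), at the cost of a slightly worse constant ($n^2/(4\pi^2)$ versus $n^2/28$) and of losing the precise description of the dynamics that the paper's induction provides. One small imprecision you already flag: for odd $n$ the diameter of the polygon is $2\rho_t\cos(\pi/(2n))$, not $2\rho_t$, so all agents can be mutual neighbors (giving $c_t=0$ and instant collapse) while $\rho_t$ is still slightly above $\half$; this is harmless because in that regime $\rho_t^2-\rho_{t+1}^2=\rho_t^2<1$ still holds and only the final step is affected, but the sentence ``$\rho_t>\half$ implies $m_t<n/2$ and $c_t>0$'' should be stated as ``$c_t\ge 0$'' with that boundary case noted.
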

\begin{proof}
We will prove by induction that for all $t \le n^2/28$, $A_i$'s are vertices of a regular $n$-gon with side length $l_t\ge(1-\frac{14}{n^2})^t$.

By the initial state of the system, the invariant holds for $t=1$. Assume that it holds before step $t=k$, we will show it holds before step $t=k+1 \le n^2/28$.

We analyze the movement of agent $i$ from $A_i$ to $A_i'$ in one step
(see \cref{fig:2dlb}). Note that $\angle A_i O A_{i+1} = \frac{2\pi}{n}$. For any $j\not\in\{i-1, i, i+1\}$, we have 
\begin{figure}
\begin{center}
\includegraphics{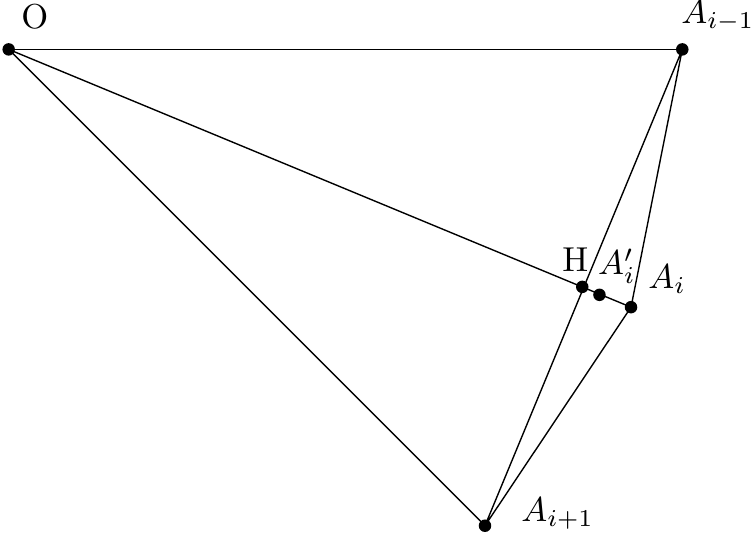}  
\end{center}
\caption{In one time step, agent $i$ moves from $A_i$ to $A_i'$, the
  centroid of $A_i, A_{i-1}, A_{i+1}$}
\label{fig:2dlb}
\end{figure}
\begin{align*}
A_iA_j &= \frac{OA_i\sin(\angle OA_iA_j)}{\sin(\angle A_i O A_j)}\\
&\ge \frac{OA_i\sin(\frac{4\pi}{n})}{\sin(\frac{\pi(n-4)}{2n})}\\
&= \frac{l_t\sin(\frac{\pi(n-2)}{2n})\sin(\frac{4\pi}{n})}{\sin(\frac{2\pi}{n})\sin(\frac{\pi(n-4)}{2n})}\\
&\ge 2 l_k\left(1-\frac{\pi^2}{2n^2}\right) \left(1-\frac{4\pi^2}{n^2}
  \right)\\
&> 1
\end{align*}
Thus, for $j\not\in\{i-1, i, i+1\}$, agents $i$ and $j$ are not neighbors. $A_i$ moves to the centroid of $A_i, A_{i-1}, A_{i+1}$ and by symmetry, the locations of the agents are vertices of a new regular $n$-gon centered at $O$ and with a smaller side length $l_{k+1}$. We have

\begin{align*}
l_{k+1} &= \frac{OA_i'}{OA_i}l_k\\
&=\left(1-\frac{2HA_i}{3OA_i}\right)l_k\\
&=\left(1-\frac{2l_k \sin(\frac{\pi}{n})\sin(\frac{2\pi}{n})}{3l_k\sin(\frac{\pi(n-2)}{2n})}\right)l_k\\
&\ge \left(1-\frac{14}{n^2}\right)l_k
\end{align*}

Thus, the system requires at least $n^2/28$ steps to converge.
\end{proof}

\section{Discussion}

In this paper, we analyzed the convergence rate of the homogeneous HK
system in arbitrary dimensions. The system is shown to converge in
polynomial time, but can take at least a quadratic number of steps in
the worst case. Getting tight bounds on the convergence time of the
system, even in just one dimension, remains an interesting open
problem.  

A particularly interesting challenge is to analyze the heterogeneous
version of the HK system, i.e. when the neighborhood radii of all the
agents are not necessarily the same \cite{MB11}. New ideas are needed to understand the
behavior of this system in particular, and directional systems in
general. Our analysis of a noisy variant of homogeneous HK system is a
step towards studying more complicated directional systems. 

Beyond convergence rate, the behavior of the homogeneous HK system is
still full of mysteries. Most notable perhaps is the 2R conjecture \cite{BHT07}:
when agents are drawn uniformly at random on an interval, they
converge to clusters at distance close to twice the minimum possible
inter-cluster distance. Resolving this conjecture remains well beyond
our understanding of the system. 

\bibliographystyle{abbrv}
\bibliography{opinions}

\end{document}